\begin{document}

\title{A Logic for Choreographies\thanks{The authors are listed in
    alphabetical order.}}

\def\titlerunning{A logic for Choreographies}
\def\authorrunning{M.~Carbone \& D.~Grohmann \& T.~Hildebrandt \& H.~L\'{o}pez}

\author{
  Marco~Carbone
  \qquad
  Davide~Grohmann
  \qquad
  Thomas~T.~Hildebrandt
  \qquad
  Hugo~A.~L\'{o}pez
  \institute{IT University of Copenhagen,
    Rued Langgaards Vej 7, 2300 K{\o}benhavn S, Denmark}
  \email{\{carbonem,davg,hilde,lopez\}@itu.dk} 
}

\maketitle

\begin{abstract}
  We explore logical reasoning for the global calculus, a coordination   model based on the notion of choreography, with the aim to provide a  methodology for specification and verification of structured  communications.  Starting with an extension of Hennessy-Milner  logic, we present the global logic (\GL\!\!\!), a modal logic  describing possible interactions among participants in a  choreography.  We illustrate its use by giving examples of  properties on service specifications. Finally, we show that, despite  \GL\!\! is undecidable, there is a significant decidable fragment  which we provide with a sound and complete proof system for checking  validity of formulae.
\end{abstract}


\section{Introduction}

Due to the continuous growth of technologies, software development is
recently shifting its focus on communication, giving rise to various
research efforts for proposing new methodologies dealing with higher
levels of complexity. A new software paradigm, known as {\em
  choreography}, has emerged with the intent to ease programming of
communication-based protocols. Intuitively, a choreography is a
description of the global flow of execution of a system where the
software architect just describes which and in what order interactions
\emph{can} take place. This idea differs from the standard approach
where the communication primitives are given for each single entity
separately. A good illustration can be seen in the way a soccer match
is planned: the coach has an overall view of the team, and organises
(a priori) how players will interact in each play (the r\^ole of a
choreography); once in the field, each player performs his role by
interacting with each of the members of his team by throwing/receiving
passes. The way each player synchronise with other members of the team
represents the r\^ole of an orchestration.

The work in \cite{CHY:dcm2006} formalises the notion of choreography
in terms of a calculus, dubbed the {\em global calculus}, which
pinpoints the basic features of the choreography paradigm. Although
choreography provides a good abstraction of the system being designed
allowing to {\em forget} about common problems that can arise when
programming communication (e.g. races over a channel), it can still have
complex structures hence being often error prone. Additionally,
choreography can be non-flexible in early design stages where the
architect might be interested in designing only parts of a system as
well as specifying only parts of a protocol (e.g. initial and final
interactions). In this view, we believe that a logical approach can
allow for more modularity in designing systems e.g. providing partial
specification of a system using the choreography paradigm.

In order to illustrate the approach proposed in this work, let us
consider an online booking scenario.  On one side, consider an airline company
AC which offers flights directly from its website. On the other side,
there is a customer looking for the best offers.  We can informally
describe the interaction protocol in terms of a sequence of allowed
interactions (as in a choreography) as follows:\\
\begin{tabular}{rl}
  1. & Customer establishes a communication with AC; \\  
  2. & Customer asks AC for a flight proposal given a set of constraints;\\
  3. & AC establishes a communication with partner AC' serving the
  destination asked by the costumer; \\
  4. & AC forwards the request made by the customer; \\
  5. & AC' sends an offer to AC;\\
  6. & AC forwards the offer to the customer
\end{tabular}

\smallbreak




\NI Note that each step above represents a communication. In the same
way that a choreographical specification describes each of the
interactions between participants, a logical characterisation of
choreographies denotes formulae describing the evolution of such
interactions. 
However, a logical characterisation gives extra
flexibility to the specification of interactions:
When writing a logical property describing specific communication patterns we
focus on describing {\bf only} the sequence of \emph{key
  interactions}, leaving room for implementations that include extra
behaviour that does not compromise the fulfilment of  the property.
For instance, in the above example, one can describe a
property leaving out the details on the forward of the request to the
airline partner, in a statement like \emph{``given an interaction
  between the customer and AC featuring a booking request, then there is an
  eventual response directed to the customer with an offer matching
  the original session''} (in this case, the offer is not necessarily from
the airline originally contacted but from one of its partners).


  In this document, we provide a link between choreographies and
  logics. Starting with an extension of Hennessy-Milner logic
  \cite{hennessy1980observing}, we provide the syntax and the
  semantics of a logic for the global calculus as well as several
  examples of choreographical properties. On decidability issues, we
  found out that the whole set of the logic is undecidable on the
  global calculus with recursion. As a result, we focus our studies in
  a decidable fragment, providing a proof system that allows for
  property verification of choreographies and show that it is sound
  and complete, in the sense that all and only valid formulae
  specified in the global logic can be provable in the proof
  system. Moreover, we can conclude that the proof checking algorithm
  using this proof system is terminating.

\paragraph{Overview of the document} First, in Section
\ref{sec:globalCalc} we recall the formal foundations of the global
calculus, and equip it with a labelled transition semantics. A logic
characterisation of the calculus and several examples of the use of
the logic are  presented in Section \ref{sec:globalLogic}. We proceed with
the study of undecidability for the logic in Section
\ref{sec:undecidability}, and a proof system relating the logical
characterisation and the global calculus for a decidable fragment of
the language is presented in Section \ref{sec:proofSys}.  Finally,
concluding remarks are presented in Section \ref{sec:conclusion}.


\section{The Global Calculus}\label{sec:globalCalc}

The Global Calculus (GC) \cite{CHY:dcm2006,carbone7scc} originates
from the Web Service Choreography Description Language (WS-CDL)
\cite{kavantzas2004web}, a description language for web services
developed by W3C. Terms in GC describe choreographies as interactions
between participants by means of message exchanges. The description of
such interactions is centred on the notion of a \emph{session}, in
which two interacting parties first establish a private connection via
some public channel and then interact through it, possibly interleaved
with other sessions. More concretely, an interaction between two
parties starts by the creation of a fresh session identifier, that
later will be used as a private channel where meaningful interactions
take place. Each session is fresh and unique, so each communication
activity will be clearly separated from other interactions. In this
section, we provide an operational semantics for GC in terms of a
label transition systems (LTS) \cite{plotkin81structural} describing
how global descriptions evolve, and relate to the type discipline that describes
the structured sequence of message exchanges between participants from
\cite{carbone7scc}.

\subsection{Syntax}

Let $\chor, \chor', \ldots$ denote {\em terms} of the calculus, often
called {\em interactions} or {\em choreographies}; $A, B, C, \ldots$
range over {\em participants}; $k,k', \ldots$ are {\em linear
  channels}; $a,b,c,\ldots$ {\em shared channels}; $v,w,\ldots$
variables; $X,Y,\ldots$ process variables; $l,l_i,\ldots$ labels for
branching; and finally $e, e', \ldots$ over unspecified arithmetic and
other first-order expressions. We write $e@A$ to mean that the
expression $e$ is evaluated using the variable related to participant
$A$ in the store.

\begin{definition}\label{definition:globalCalc}
  The syntax of the global calculus \cite{CHY:dcm2006} is given by the
  following grammar:
  \begin{alignat}{3}
    \chor ::= \ & \phantom{{}\mid{}}
    &\ & \INACT                          \label{eq:Ginaction} \tag{inaction} \\
    &\mid && \init{A}{B}{a}{k}\pfx \chor \label{eq:Ginit} \tag{init} \\
    &\mid && \interact{A}{B}{k}{e}{y}\pfx\chor  \label{eq:Gcom} \tag{com} \\
    &\mid && \choice{A}{B}{k}{l}{\chor}   \label{eq:Gchoice} \tag{choice} \\
    &\mid && \chor_1 \pp \chor_2          \label{eq:Gpar} \tag{par} \\
    &\mid && \itn{e@A}{\chor_1}{\chor_2}  \label{eq:Gcond} \tag{cond}\\
    &\mid && X                            \label{eq:Grecvar} \tag{recvar} \\
    &\mid && \mu X\pfx C                  \label{eq:Grec} \tag{recursion}
  \end{alignat}
\end{definition}
\NI Intuitively, the term (\ref{eq:Ginaction}) denotes a system where
no interactions take place. (\ref{eq:Ginit}) denotes a session
initiation by $A$ via $B$'s service channel $a$, with a fresh session
channel $k$ and continuation $\chor$. Note that $k$ is bound in
$\chor$.  (\ref{eq:Gcom}) denotes an in-session communication of the
evaluation (at $A$'s) of the expression $e$ over a session channel
$k$. In this case, $y$ does not bind in $\chor$ (our semantics will
treat $y$ as a variable in the store of $B$).  (\ref{eq:Gchoice})
denotes a labelled choice over session channel $k$ and set of labels
$I$. In (\ref{eq:Gpar}), $\chor_1 \pp \chor_2$ denotes the parallel
product between $\chor_1$ and $\chor_2$.  (\ref{eq:Gcond}) denotes the
standard conditional operator where $e@A$ indicates that the
expression $e$ has to be evaluated in the store of participant $A$.
In (\ref{eq:Grec}), $\mu X\pfx\chor$ is the minimal fix point
operation for recursion, where the variable $X$ of (\ref{eq:Grecvar})
is bound in $\chor$.  The free and bound session channels and term
variables are defined in the usual way.  The calculus is equipped with
a standard structural congruence $\equiv$, defined as the minimal
congruence relation on interactions $\chor$, such that $\equiv$ is a
commutative monoid with respect to $\pp$ and $\INACT$, it is closed
under alpha equivalence $\equiv_\alpha$ of terms, and it is closed
under the recursion unfolding, i.e., $ \mu X.\chor \equiv \chor
\MSUBS{\mu X. \chor}{X}$.

\begin{remark}[Differences with the approach in \cite{carbone7scc}]
  \label{remark::one}
  Excluding the lack of local assignment, we argue that this monadic version of
  GC is, to some extent, as expressive as the one Global Calculus originally reported
  in \cite{carbone7scc}.  In particular, note that
  $\interact{A}{B}{k}{\mathsf{op},e}{y}$ in \cite{carbone7scc}
  captures both selection and message passing which are instead
  disentangled in our case (mainly for clarity reasons). The absence
  of $\mathsf{op}$ in the interaction process
  $\interact{A}{B}{k}{e}{y}$ can be easily encoded with the existing
  operators. In fact, $\Sigma_{i\in I}\interact{A}{B}{k}{op_i,e}{y} \pfx
  \chor'_i$ can be decomposed into $\choice{A}{B}{k}{op}{\chor''}$ where $\chor''_i=
  \interact{A}{B}{k}{e}{y} \pfx \chor'_i$ 
  (although we
  lose atomicity).
\end{remark}

\subsection{Semantics}

We give the operational semantics in terms of configurations $(\sigma,
\chor)$, where $\sigma$ represents the state of the system and $\chor$
the choreography actually being executed. The state $\sigma$ contains
a set of variables labelled by participants. As described in the
previous subsection, a variable $x$ located at participant $A$ is
written as $x@A$.  The same variable name labelled with different
participant names denotes different variables (hence $\sigma(x@A$) and
$\sigma(x@B)$ may differ).  Formally, the operational semantics is
defined as a labelled transition system (LTS). A transition
$(\sigma,\chor) \action{\ell} (\sigma',\chor')$ says that a
choreography $\chor$ in a state $\sigma$ executes an action (or label)
$\ell$ and evolves into $\chor'$ with a new state $\sigma'$. Actions
are defined as
$\ell=\{\initF{A}{B}{a(k)},\comF{A}{B}{k},\branchF{A}{B}{k}{l_i}\}$,
denoting initiation, in-session communication and branch selection,
respectively.  We write $(\sigma, \chor) \action{} (\sigma', \chor')$
when $\ell$ irrelevant, and $\action{}^*$ denotes the transitive
closure of $\action{}$. The transition relation $\action{}$ is defined
as the minimum relation on pairs state/interaction satisfying the
rules in Table~\ref{table:global:semantics}.

\begin{table}
  \begin{gather*}
    \Did{G-Init}\
    \inferenceg{} { h \text{ fresh} } { ({\sigma,\init
        {A}{B}{a}{k}\pfx \chor}) \action{\initF{A}{B}{a(h)}}
      ({\sigma,\chor[h/k]}) }
    \\[3mm]
    \Did{G-Com}\ 
    \inferenceg{} {\sigma(e@A)\Downarrow v} {
      ({\sigma,\interact{A}{B}{k}{e}{x}\pfx \chor})
      \action{\comF{A}{B}{k}} ({\sigma[x@B \mapsto v],\chor}) }
    \\[3mm]
    \Did{G-Choice}\ 
    \inferenceg{} {}
    {({\sigma,\choice{A}{B}{k}{l}{\chor}})
      \action{\branchF{A}{B}{k}{l_i}} ({\sigma,\chor_i})}
    \\[3mm]
    \Did{G-Par}\ 
    \inferenceg{} {({\sigma,\chor_1}) \action{\ell}
      ({\sigma',\chor_1'})} {({\sigma,\chor_1\pp \chor_2})
      \action{\ell} ({\sigma',\chor_1'\pp \chor_2})}
    \\[3mm]
    \Did{G-Struct}\ 
    \inferenceg{} {\chor\equiv \chor' \quad
      ({\sigma,\chor'}) \action{\ell} ({\sigma', \chor''}) \quad
      \chor''\equiv \chor'''} {({\sigma,\chor}) \action{\ell}
      ({\sigma',\chor'''})}
    \\[3mm]
    \Did{G-IfT}\ 
    \inferenceg{} {\sigma(e@A)\Downarrow \true \quad
      (\sigma, \chor_1) \action{\ell} (\sigma', \chor'_1)} { (\sigma,
      \itn{e@A}{\chor_1}{\chor_2}) \action{\ell} (\sigma', \chor'_1) }
    \\[3mm]
    \Did{G-IfF}\ 
    \inferenceg{} {\sigma(e@A)\Downarrow \false \quad
      (\sigma, \chor_2) \action{\ell} (\sigma', \chor'_2)} { (\sigma,
      \itn{e@A}{\chor_1}{\chor_2}) \action{\ell}(\sigma',\chor'_2)}
  \end{gather*}
  \caption{Operational Semantics for the Global Calculus}
  \label{table:global:semantics}
\end{table}

Intuitively, transition \Did{G-Init} describes the evolution of a
session initiation: after $A$ initiates a session with $B$ on service
channel $a$, $A$ and $B$ share the fresh channel $h$ locally.
\Did{G-Com} describes the main interaction rule of the calculus: the
expression $e$ is evaluated into $v$ in the $A$-portion of the state
$\sigma$ and then assigned to the variable $x$ located at $B$
resulting in the new state $\sigma [x@B \mapsto v]$. \Did{G-Choice}
chooses the evolution of a choreography resulting from a labelled
choice over a session key $k$. \Did{G-IfT} and \Did{G-IfF} show the
possible paths that a deterministic evolution of a choreography can
produce. \Did{G-Par} and \Did{G-Struct} behave as the standard rules
for parallel product and structural congruence, respectively.

\begin{remark}[Global Parallel]
  Parallel composition in the global calculus differs from the notion
  of parallel found in standard concurrency models based on
  input/output primitives \cite{milner:99:cmspc}. In the latter, a
  term $P_1\pp P_2$ may allow {\em interactions} between $P_1$ and
  $P_2$. However, in the global calculus, the parallel composition of
  two choreographies $\chor_1\pp \chor_2$ concerns two parts of the
  described system where {\em interactions} may occur in $\chor_1$ and
  $\chor_2$ but never across the parallel operator $\pp$. This is
  because an interaction $A\rightarrow B\ldots$ abstracts from the
  actual end-point behaviour, i.e., how $A$ sends and $B$ receives. In
  this model, dependencies between two choreographies can be expressed
  by using variables in the state $\sigma$.
\end{remark}

In its original presentation \cite{carbone7scc}, GC comes equipped
with a reduction semantics unlike the one presented in Table
\ref{table:global:semantics}. Our LTS semantics has the advantage of
allowing to observe changes on the behaviour of the system, which will
prove useful when relating to the logical characterisation in Section
\ref{sec:globalLogic}. We conjecture that our proposed LTS semantics
and the reduction semantics of the global calculus originally
presented in \cite{carbone7scc} coincide (taking into account the
considerations in Remark~\ref{remark::one}).

\begin{example}[Online Booking]\label{ex:onlinebooking}
  We consider the example presented in the introduction, i.e., a
  simplified version of the on-line booking scenario presented in
  \cite{LOP-places09}. Here, the customer (Cust) establishes a session
  with the airline company (AC) using service (on-line booking,
  shorted as ob) and creating the session key $k_1$. Once the session
  is established, the customer will request the company about a
  flight offer with his booking data, along the session key $k_1$. The
  airline company will process the customer request and, after
  requesting another airline company (AC') for the service, will send
  a reply back with an offer. The customer will eventually accept the
  offer, sending back an acknowledgment to the airline company using
  $k_1$. The following specification in the GC represents the
  protocol:
  \begin{align} \label{example:syntax} 
    \chor_{\mathsf{OB}} = {} &
    \init{\text{Cust}}{\text{AC}}{\text{ob}}{k_1} \pfx
    \interact{\text{Cust}}{\text{AC}}{k_1}{\text{booking}}{x} \pfx 
    \init{\text{AC}}{\text{AC'}}{\text{ob}}{k_2} \pfx
    \tag{OB}\\
    & 
    \interact{\text{AC}}{\text{AC'}}{k_2}{\text{x}}{x'} \pfx 
    \interact{\text{AC'}}{\text{AC}}{k_2}{\text{offer}}{y} \pfx
    \interact{\text{AC}}{\text{Cust}}{k_1}{\text{y}}{y''} \pfx
    \interact{\text{Cust}}{\text{AC}}{k_1}{\text{accept}}{z} \pfx
    \INACT
    \notag{}
  \end{align}
\end{example}

\subsection{Session Types for the Global Calculus}


We use a generalisation of session types \cite{honda1998lpa} for
global interactions, first presented in \cite{carbone7scc}.  Session
types in GC are used to structure sequence of message exchanges in a
session.  Their syntax is as follows:
\begin{equation}
  \alpha =  
  \uparrow(\theta).\alpha
  ~|~
  \downarrow(\theta).\alpha
  ~|~
  \&\{ l_i: \alpha_i\}_{i\in{I}}
  ~|~ \oplus\{ l_i: \alpha_i\}_{i\in{I}}  ~|~  \endT ~|~ \mu \mathbf{
    t }\pfx \alpha ~|~  \mathbf{ t }
\end{equation}
where $\theta, \theta', \dots$ range over value types $\mathtt{bool,
  string, int, \dots}$. $\alpha, \alpha', \dots $ are session
types. The first four types are associated with the various
communication operations.  $\downarrow(\theta).\alpha$ and
$\uparrow(\theta).\alpha$ are the input and output types
respectively. Similarly, $\&\{ l_i: \alpha_i\}_{i\in{I}}$ is the
branching type while $\oplus\{ l_i: \alpha_i\}_{i\in{I}}$ is the
selection type. The type $\endT$ indicates session termination and is
often omitted.  $\mu \mathbf{t} \pfx \alpha$ indicates a recursive
type with $\mathbf{t}$ as a type variable. $\mu \mathbf{t} \pfx
\alpha$ binds the free occurrences of $\mathbf{t}$ in $\alpha$. We
take an \emph{equi-recursive} view on types, not distinguishing
between $\mu \mathbf{t} \pfx \alpha$ and its unfolding $\alpha [\mu
\mathbf{t} \pfx \alpha / \mathbf{t}]$.

A typing judgment has the form $\typerule{\typeEnv}{}{\chor :
  \Delta}$, where $\typeEnv, \Delta$ are \emph{service type} and
\emph{session type} environments, respectively. Typically, $\typeEnv$
contains a set of type assignments of the form $a@A: \alpha$, which
says that a service $a$ located at participant $A$ may be invoked and
run a session according to type $\alpha$. $\Delta$ contains type
assignments of the form $k[A,B]: \alpha$ which says that a session
channel $k$ identifies a session between participants $A$ and $B$ and
has session type $\alpha$ when seen from the viewpoint of $A$. The
typing rules are omitted, and we refer to \cite{carbone:tbc} for the
full account of the type discipline noting that the observations made
in Remark~\ref{remark::one} will require extra typing rules.

Returning to the specification (\ref{example:syntax}) in
Example~\ref{ex:onlinebooking}, the service type of the airline
company AC at channel $ob$ can be described as:
\begin{equation*}
  \text{ob}@\text{AC}:(k_1,k_2) ~ k_1 \downarrow
  \text{booking}(\mathtt{string}) \pfx~ k_2 \uparrow
  \text{x}(\mathtt{string}) \pfx k_2 \downarrow
  \text{offer}(\mathtt{int}) \pfx k_1 \uparrow
  \text{y}(\mathtt{int})\pfx k_1 \downarrow \text{accept}
  (\mathtt{int}) \pfx \endT \, .
\end{equation*}


\begin{assumption}
  In the sequel, we only consider choreographies that satisfy the
  typing discipline.
\end{assumption}


\section{\GL: A Logic for the Global Calculus}\label{sec:globalLogic}

In this section, we introduce a logic for choreographies, inspired by
the modal logic for session types presented in
\cite{Berger2008Completeness-an}.  The logical language comprises
assertions for equality and value/name passing.

\subsection{Syntax}
\begin{table}
  \hspace{-1cm}
  \begin{minipage}{.5\textwidth}
    \begin{alignat}{3}
      \phi, \chi \ ::= \ & \phantom{{} \mid {}}
      & \ & \exists \var\pfx \phi \label{eq:exists} \tag{f-exists}\\
      & \mid && \phi \land \chi \label{eq:and} \tag{f-and} \\
      & \mid && \neg \phi \label{eq:neg} \tag{f-neg} \\
      & \mid && \langle\ell\rangle \phi \label{eq:action} \tag{f-action} \\
      & \mid && \endF \label{eq:termination} \tag{f-termination} \\
      & \mid && e_1@A = e_2@B \label{eq:equality} \tag{f-equality} \\
      & \mid && \phi \mid \chi \label{eq:parallel} \tag{f-parallel} \\
      & \mid && \may \phi \label{eq:may} \tag{f-may} 
    \end{alignat}
  \end{minipage}
  \hfill
  \begin{minipage}{.5\textwidth}
    \vspace{-2.9cm}
    \begin{alignat}{3}
      \ell \ ::= \ & \phantom{{}\mid{}}
      &\ & \initF A B {a(k)} \label{eq:init} \tag{l-init} \\
      & \mid && \comF ABk && \label{eq:com} \tag{l-com}\\
      & \mid && \branchF ABkl & \label{eq:branch} \tag{l-branch}
    \end{alignat}
  \end{minipage}
  \caption{\GL: Syntax of formulae}
  \label{table:ChoreographyLogic}
\end{table}
The grammar of assertions is given in
Table~\ref{table:ChoreographyLogic}.  Choreography assertions (ranged
over by $\phi, \phi', \chi, \dots$) give a logical interpretation of
the global calculus introduced in the previous section.  The logic
includes the standard First Order Logic (FOL) operators $\land$, $\neg$, and $\exists$. In
$\exists \var \pfx \phi$, the variable $\var$ is meant to range over
service and session channels, participants, labels for branching and
basic placeholders for expressions.  Accordingly, it works as a binder
in $\phi$.  In addition to the standard operators, the operator
(\ref{eq:action}) represents the execution of a labelled action $\ell$
followed by the assertion $\phi$. Those labels in $\ell$ match the ones
in the LTS of GC, i.e., they are (\ref{eq:init}), (\ref{eq:com}), and
(\ref{eq:branch}).  The formula (\ref{eq:termination}) represents the
process termination.  We also include an unspecified, but decidable,
(\ref{eq:equality}) operator on expressions as in
\cite{Berger2008Completeness-an}.  (\ref{eq:may}) denotes the standard
eventually operators from Linear Temporal Logic (LTL)
\cite{emerson1991temporal}. The spatial operator (\ref{eq:parallel})
denotes composition of formulae: because of the unique nature of
parallel composition in choreographies, we have used the symbol $\mid$
(as in separation logic \cite{reynolds2002sll} and spatial logic
\cite{caires2001spatial}) in order to stress the fact that there is no
interference between two choreographies running in
parallel.

\begin{notation}[Existential quantification over action labels]
  In order to simplify the readability, we introduce the concept of
  existential quantification over action labels as a short-cut to mean
  the following:
  \begin{alignat*}{2}
    \exists \ell \pfx \langle \ell \rangle \phi \ \DEFEQ {} \
    &&& \exists A,B,a,k \pfx
    \langle \initF{A}{B}{a(k)} \rangle \pfx \phi \lor {} \\
    &&& \exists A,B,k \pfx
    \langle \comF{A}{B}{k} \rangle \pfx \phi \lor {} \\
    &&& \exists A,B,k,l \pfx
    \langle \branchF{A}{B}{k}{l} \rangle \pfx \phi \, .
  \end{alignat*}
\end{notation}

\begin{remark}[Derived Operators]
  We can get the full account of the logic by deriving the standard
  set of strong modalities from the above presented operators. In
  particular, we can encode the constant true ($\true$) and false
  ($\false$), the next ($\nextOp \phi$) and the always operators
  ($[] \phi$) from LTL.
  \begin{alignat*}{3}
    \true & \DEFEQ (0@A = 0@A) &\qquad\qquad
    \false  & \DEFEQ (0@A = 1@A) &\qquad
    (e_1 \neq e_2) & \DEFEQ \neg (e_1 = e_2) \\
    \forall x \pfx \phi & \DEFEQ \neg \exists x  \pfx \lnot \phi &
    \phi \lor \chi & \DEFEQ \lnot ( \lnot \phi  \land \lnot \chi) &
    \phi => \chi & \DEFEQ \lnot \phi \lor \chi \\
    [] \phi & \DEFEQ \lnot <<>> \neg \phi &
    [\ell] \phi & \DEFEQ \neg \langle \ell \rangle \neg \phi &
    \nextOp \phi & \DEFEQ \exists \ell \pfx \langle \ell \rangle \phi \, . 
  \end{alignat*}
\end{remark}

In the rest of this section, we illustrate the expressiveness of our
logic through a sequence of simple, yet illuminating examples, giving
an intuition of how the modalities introduced plus the existential
operator $\exists$ allow to express properties of choreographies.

\begin{example}[Availability, Service Usage and Coupling]
  The logic above allows to express that, given a service invoker
  (known as $A$ in this setting) requesting the service $a$, there
  exists another participant (called $B$ in the example) providing $a$
  with $A$ invoking it.  This can be formulated in \GL as follows:
  \begin{equation*}
    \exists B \pfx \langle \initF{A}{B}{a(k)}\rangle \true \, .
  \end{equation*}
  Assume now, that we want to ensure that services available are
  actually used. We can use the dual property for availability, i.e.,
  for a service provider $B$ offering $a$, there exists someone
  invoking $a$:
  \begin{equation*}
    \exists A \pfx \langle \initF{A}{B}{a(k)}\rangle \true \, .
  \end{equation*}
  Verifying that there is a service pairing two different participants
  in a choreography can be done by existentially quantifying over the
  shared channels used in an initiation action. A formula in \GL
  representing this can be the following one:
  \begin{equation*}
    \exists a \pfx \langle \initF{A}{B}{a(k)} \rangle \true \, .
  \end{equation*}
\end{example}
\begin{example}[Causality Analysis]
  The modal operators of the logic can be used to perform studies of
  the causal properties that our specified choreography can fulfil.
  For instance, we can specify that given an expression $e$ evaluated
  to true at participant $A$, there is an eventual firing of a
  choreography that satisfies property $\phi_1$, whilst $\phi_2$ will
  never be satisfied.  Such a property can be specified as follows:
  \begin{equation*}
    (e@A = \true) \land <<>> (\phi_1) \land [] \lnot \phi_2 \, .
  \end{equation*}
\end{example}
\begin{example}[Response Abstraction]
  \begin{figure}
    \begin{center}
      \includegraphics[width=\textwidth]{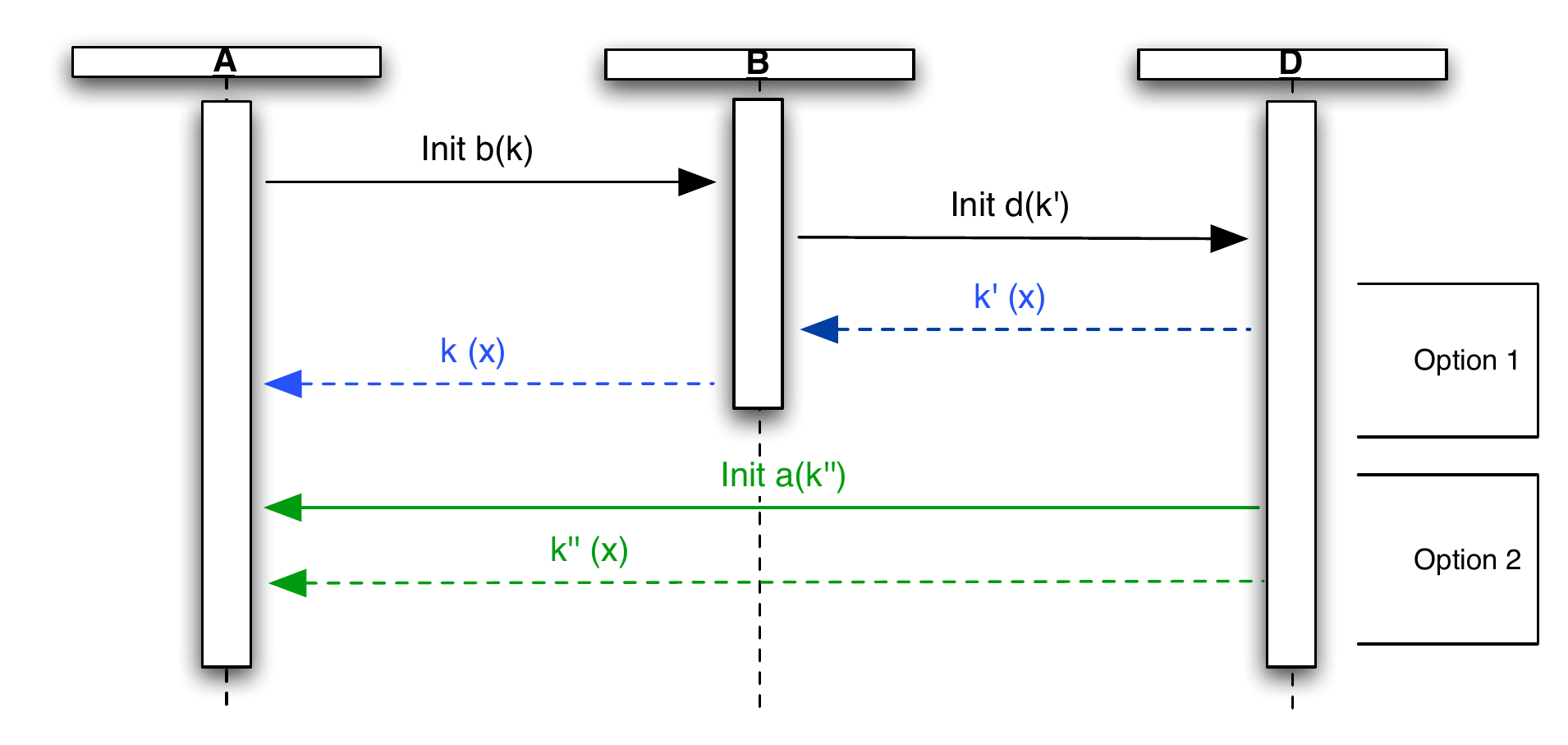}
    \end{center}
    \caption{Diagram of a partial specification.}
    \label{fig:diagram}
  \end{figure}
  An interesting aspect of our logic is that it allows for the
  declaration of partial specification properties regarding the
  interaction of the participants involved in a choreography. Take for
  instance the interaction diagram in Figure~\ref{fig:diagram}.  The
  participant $A$ invokes service $b$ at $B$'s and then $B$ invokes
  $D$'s service $d$.  At this point, $D$ can send the content of
  variable $x$ to $A$ in two different ways: either by using those
  originally established sessions or by invoking a new service at
  $A$'s. However, at the end of both computation paths, variable $z$
  (located at $A$'s) will contain the value of $x$. In the global
  calculus, this two optional behaviour can be modelled as follows:
  \begin{align}
    C_1 & = \init{A}{B}{b}{k} \pfx \init{B}{D}{d}{k'} \pfx
    \interact{D}{B}{k'}{x}{y_B}\pfx\interact{B}{A}{k}{y_B}{z} \pfx \INACT 
    \label{eq:option1} \tag{Option 1} \\
    C_2 & = \init{A}{B}{b}{k} \pfx \init{B}{D}{d}{k'} \pfx
     \init{D}{A}{a}{k''}\pfx \interact{D}{A}{k''}{x}{z} \, \pfx \INACT \, .
     \label{eq:option2} \tag{Option 2}
  \end{align}
  We argue that, under the point of view of $A$, both options are
  sufficiently good if, after an initial interaction with $B$ is
  established, there is an eventual response that binds variable
  $z$. Such a property can be expressed by the \GL formula:
  \begin{equation*}
    \exists X,{k''}\pfx \langle\initF{A}{B}{a(k)} \rangle
    \may \Big(\langle \comF{X}{A}{k''}\rangle (z@A=x@D) \Big) \pfx \endT \, .
  \end{equation*}
  Notice that both the choreographies (\ref{eq:option1}) and
  (\ref{eq:option2}) \emph{satisfy} the partial specification
  above. This will be clear in Section~\ref{logic-assertions} where we
  introduce the semantics of logic.

  Also note that a third option for the protocol at hand is to use
  \emph{delegation} (the ability of communicating session keys to
  third participants not involved during session initiation). However,
  the current version of the global calculus does not feature such an
  operation and we leave it as future work.
\end{example}

\begin{example}[Connectedness]
  The work in \cite{carbone7scc} proposes a set of criteria for
  guaranteeing a safe end-point projection between global and local
  specifications (note that the choreography in the previous example
  does not respect such properties). Essentially, a valid global
  specification has to fulfil three different criteria, namely
  Connectedness, Well-threadedness and Coherence.  It is interesting
  to see that some of these criteria relate to global and local
  causality relations between the interactions in a choreography, and
  can be easily formalised as properties in the choreography logic
  presented here. Below, we consider the notion of connectedness and
  leave the other cases as future work. Connectedness dictates a
  global causality principle among interactions: any two consecutive
  interactions $\ldots A\rightarrow B \pfx C\rightarrow D\ldots$ in a
  choreography are such that $B=C$.  
  In the following, let $\mathsf{Interact}(A,B)\phi$ be true whenever
  $\langle\ell\rangle\phi$ holds for some $\ell$ with an interaction
  from $A$ to $B$.  Connectedness can be specified as:
  \begin{equation*}
    \forall A,B\pfx
    []
    \Big(
    \mathsf{Interact}(A,B)\true \Rightarrow {}  
    \exists C\pfx 
    \big(\mathsf{Interact}(A,B)\mathsf{Interact}(B,C)\true
    \lor 
    \mathsf{Interact}(A,B)\neg\exists\ell\langle\ell\rangle\true\big)
    \Big) \, . 
  \end{equation*}
\end{example}

\subsection{Semantics}\label{logic-assertions}

\begin{table}
  \centering
  \begin{displaymath}
    \begin{array}{lcl}
      \chor\ |=_\sigma \endF & \defSym & \chor \equiv \INACT 
      \\
      \chor\ |=_\sigma (e_1@A = e_2@B) & \defSym & \sigma(e_1@A)\Downarrow v\text{ and }\sigma(e_2@B)\Downarrow v 
      \\
      \chor\ |=_{\sigma}\actionF{ \ell}\phi&\defSym &  (\sigma,\chor)\action{\ell}(\sigma',\chor')\text{ and }\chor'|=_{\sigma'} \phi
      \\
      \chor\ |=_\sigma \phi\land\chi&\defSym &\chor|=_\sigma \phi \text{ and }\chor|=_\sigma \chi 
      \\
      \chor\ |=_\sigma \neg \phi & \defSym & \chor \not |=_\sigma \phi     
      \\
      \chor\ |=_\sigma \exists\var\pfx\phi & \defSym & \chor |=_\sigma \phi[w/\var]
      \text{ (for some appropriate $w$)}
      \\
      \chor\ |=_\sigma \may \phi & \defSym & (\sigma,\chor) \action{}^*(\sigma',\chor') \text{ and } \chor' |=_{\sigma'} \phi
      \\
      \chor\ |=_\sigma \phi \pp \chi & \defSym & \chor\ \equiv\ \chor_1\pp \chor_2 \text{ such that }
      \chor_1 |=_\sigma\phi \text{ and } \chor_2 |=_\sigma \chi
    \end{array}
  \end{displaymath}
  \caption{Assertions of the Choreography Logic}
  \label{table:global:assertions}
\end{table}

We now give a formal meaning to the assertions introduced above with
respect to the semantics of the global calculus introduced in the
previous section. In particular, we introduce the notion of
satisfaction.  We write $\chor |=_\sigma \phi$ whenever a state
$\sigma$ and a choreography $\chor$ satisfy a \GL formula $\phi$.  The
relation $|=_\sigma$ is defined by the rules given in
Table~\ref{table:global:assertions}.  In the $\exists\var\pfx\phi$
case, $w$ should be an appropriate value according to the type of
$\var$, e.g., a participant if $\var$ is a participant placeholder.

\begin{definition}[Satisfiability, Validity and Logical Equivalence]\
  \begin{itemize}
  \item A formula $\phi$ is \emph{satisfiable} if there exists some
    configuration under which it is true, that is, $\chor |=_\sigma
    \phi$ for some $(\chor,\sigma)$.
  \item A formula $\phi$ is \emph{valid} if it is true in every
    configuration, that is, $\chor |=_\sigma \phi$ for every
    $(\sigma,\chor)$.
  \item A formula $\chi$ is a \emph{logical consequence} of a formula
    $\phi$ (or $\phi$ \emph{logically implies} $\chi$), denote with
    an abuse of notation as $\phi |= \chi$, if every configuration
    $(\sigma,\chor)$ that makes $\phi$ true also makes $\chi$ true.
  \item We say that a formula $\phi$ is \emph{logical equivalent} to
    a formula $\chi$, written $\phi \logicEquiv \chi$, if $\phi |= \chi$ iff
    $\chi |= \phi$.
  \end{itemize}
\end{definition}



\section{Undecidability of Global Logic} \label{sec:undecidability}
In this section we focus on the undecidability of the global logic for
the global calculus with recursion given in
Section~\ref{sec:globalCalc}.  In order to prove that the global logic
is undecidable, we use a reduction from the Post Correspondence
Problem (PCP)~\cite{Post:pcp} similarly to the one proposed in
\cite{ct:csl01}. The idea is to encode in the global calculus a
``program'' which simulates the construction of PCP.
We first give a formal definition of the PCP.  In the sequel, $\cdot$
denotes word concatenation.
\begin{definition}[PCP]
  Let $s,t,\ldots$ range over $\Sigma^*$ where $\Sigma = \{0, 1\}$ and
  let $\epsilon$ be the empty word.  An instance of PCP is a set of
  pairs of words $\{(s_1,t_1), \ldots, (s_n,t_n)\}$ over
  $\Sigma^*\times\Sigma^*$.  The Post Correspondence Problem is to
  find a sequence $i_0,i_1,\dots,i_k$ ($1 \leq i_j \leq n$ for all
  $0\leq j \leq k$) such that $s_{i_0}\cdot \ldots \cdot s_{i_k} =
  t_{i_0}\cdot \ldots \cdot t_{i_k}$.
\end{definition}

\NI Intuitively, PCP consists of finding some string in $\Sigma^*$
which can be obtained by the concatenation $s_{i_0}\cdot \ldots \cdot
s_{i_k}$ as well as by $t_{i_0}\cdot \ldots \cdot t_{i_k}$. Such a
problem has been proved to be undecidable~\cite{Post:pcp}.
Our goal is to find a GC term that takes a random pair of words from
an instance of PCP and append them to an ``incremental pair'' of words
which encodes the current state of the sequences $s_{i_0}\cdot \ldots
\cdot s_{i_k}$ and $t_{i_0}\cdot \ldots \cdot t_{i_k}$.  Technically,
we need a choreography that assigns randomly a natural number in $\{1,
\dots, n\}$ to a variable $r$ in some participant $B$, and another
choreography that picks a pair of words from the PCP instance,
accordingly to value in the variable $r@B$, and then appends them to
the ``incremental pair'' of words in $A$.
Formally, 
\begin{definition}[Encoding of PCP]\label{def:PCPencoding}
  Let $A_1,\dots,A_n,A,B$ be participants and $a,b$ shared names for
  sessions, then define the two choreographies as shown below:
  \begin{align*}
    &
    \begin{array}{rcl}
      \textsf{Random}(A_1,\dots,A_n,B,a) &
      \DEFEQ &
      \phantom{{}\pp\ {}}\mu X \pfx
      \init{A_1}{B}{a}{k} \pfx
      \interact{A_1}{B}{k}{1}{r} \pfx X\\[1mm]
      &&
      \pp\ \mu X \pfx
      \init{A_2}{B}{a}{k} \pfx
      \interact{A_2}{B}{k}{2}{r} \pfx X\\[1mm]
      &&
      \pp\ \ldots\\[1mm]
      &&
      \pp\ \mu X \pfx
      \init{A_n}{B}{a}{k} \pfx
      \interact{A_n}{B}{k}{n}{r} \pfx X
      \\[2ex]
      \textsf{Append}(A,B,b) &
      \DEFEQ &
      \mu X\pfx
      \init{A}{B}{b}{k} \pfx
      \interact{A}{B}{k}{str1}{tmp1} \pfx
      \interact{A}{B}{k}{str2}{tmp2} \pfx {} \\
      &
      &
      \mathbf{if}\ r@B = 1\ \mathbf{then} \\
      &
      &
      \quad
      \interact{B}{A}{k}{tmp1 \cdot s_1}{str1} \pfx
      \interact{B}{A}{k}{tmp2 \cdot t_1}{str2} \pfx X \\
      &
      &
      \mathbf{else}\ \mathbf{if}\ r@B = 2\ \mathbf{then} \\
      &
      &
      \quad
      \interact{B}{A}{k}{tmp1 \cdot s_2}{str1} \pfx
      \interact{B}{A}{k}{tmp2 \cdot t_2}{str2} \pfx X \\
      &
      &
      \mathbf{else} \ \mathbf{if}\ r@B = 3\ \mathbf{then} \\
      &
      &
      \qquad \vdots \\
      &
      &
      \mathbf{else} \ \mathbf{if}\ r@B = n\ \mathbf{then} \\
      &
      &
      \quad
      \interact{B}{A}{k}{tmp1 \cdot s_n}{str1} \pfx
      \interact{B}{A}{k}{tmp2 \cdot t_n}{str2} \pfx X \\
      &
      &
      \mathbf{else }\ X
    \end{array}
  \end{align*}
  We define the initial configuration $(\sigma,\chor)$ to be formed by
  the choreography and the state below:
  {\small
  \begin{align*}
    \chor & \DEFEQ
    \textsf{Random}(A_1,\dots,A_n,B,a) \pp  \textsf{Append}(A,B,b) \\
    \sigma & \DEFEQ
    [str1@A \mapsto \epsilon,\ str2@A \mapsto \epsilon,\
    tmp1@B \mapsto \epsilon,\ tmp2@B \mapsto \epsilon,\
    r@B \mapsto 1] \, .
  \end{align*}
  }
  For encoding the PCP existence question ($s_{i_0}\cdot \ldots \cdot
  s_{i_k} = t_{i_0}\cdot \ldots \cdot t_{i_k}$) we can encode it as a
  \GL formula:
  {\small
  \begin{equation*}
    \phi \DEFEQ
    <<>> \Big(
    (str1@A = str2@A) \land
    (str1@A \neq \epsilon) \land
    (str2@A \neq \epsilon)
    \Big) \, .
  \end{equation*}
  }
\end{definition}
\NI Above, each participant $A_i$ (with $i\in \{1,\dots,n\}$)
recursively opens a session with participant $B$ and writes in the
variable $r@B$ the value $i$. Moreover, the participant $B$ stores the
knowledge of all the word pairs $(s_i,t_i)$, while the participant $A$
takes randomly a word pair from $B$ and then append it to his
incremental pair of words: $(str1,str2)$.  Next, the formula $\phi$
states that there exists a computational path from the initial
configuration to a configuration which stores in $str1$ and $str2$ two
equal non-empty strings.

\begin{theorem}
  The global logic is undecidable on the global calculus with
  recursion.
\end{theorem}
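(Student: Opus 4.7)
The strategy is a reduction from the Post Correspondence Problem to model-checking \GL, using exactly the encoding given in Definition~\ref{def:PCPencoding}. For an arbitrary PCP instance $\{(s_1,t_1),\dots,(s_n,t_n)\}$ I would form the configuration $(\sigma,\chor)$ and the formula $\phi$ as specified there, and establish the correspondence
\[
  \chor \ |=_\sigma\ \phi \quad \text{iff} \quad \text{the PCP instance has a solution.}
\]
Since PCP is undecidable, the existence of a decision procedure for $\chor|=_\sigma\phi$ would contradict Post's theorem, giving the desired undecidability.

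For the ``if'' direction, given a PCP solution $i_0,\dots,i_k$, I would exhibit an explicit trace in the LTS. Using \Did{G-Par} together with the recursion-unfolding clause of $\equiv$ under \Did{G-Struct}, for each $j=0,\dots,k$ I first schedule the $i_j$-th summand of \textsf{Random} (firing \Did{G-Init} on channel $a$ with $A_{i_j}$ and then \Did{G-Com} to write $i_j$ into $r@B$), and then one unfolding of \textsf{Append} (firing \Did{G-Init} on $b$, two \Did{G-Com} steps transmitting the current strings to $B$ as $tmp1,tmp2$, the appropriate \Did{G-IfT}/\Did{G-IfF} cascade selecting the branch $r@B=i_j$, and two \Did{G-Com} steps writing back the concatenations). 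A routine induction on $j$ shows that after round $j$ the state assigns $str1@A = s_{i_0}\cdots s_{i_j}$ and $str2@A = t_{i_0}\cdots t_{i_j}$; after round $k$ these strings are equal and non-empty by hypothesis, so the reached configuration satisfies the body of $\may$ and hence $\chor|=_\sigma\phi$ by the clause for $\may$ in Table~\ref{table:global:assertions}.

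For the converse, suppose $(\sigma,\chor)\action{}^*(\sigma',\chor')$ with $\chor'|=_{\sigma'}(str1@A=str2@A)\land(str1@A\neq\epsilon)\land(str2@A\neq\epsilon)$. Inspecting the syntax of $\chor$, the only rules that can modify $str1@A$ or $str2@A$ are the two final \Did{G-Com} steps inside some non-default branch of \textsf{Append}; the $\mathbf{else}\ X$ default is never entered, since $r@B$ is initialised to $1$ and only overwritten by \textsf{Random} with values in $\{1,\dots,n\}$. Reading off from the trace the sequence $j_0,\dots,j_m$ of branch indices actually selected by consecutive conditional tests of \textsf{Append} (with $m\ge 0$, forced by $str1@A\neq\epsilon$), a direct induction yields $str1@A = s_{j_0}\cdots s_{j_m}$ and $str2@A = t_{j_0}\cdots t_{j_m}$; equality of these non-empty strings then exhibits $(j_0,\dots,j_m)$ as a PCP solution.

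The main obstacle is justifying the interleaving freedom used in the forward direction: I must argue that \textsf{Random} and \textsf{Append} can indeed be scheduled in lock-step as prescribed. This relies on the observation that the two components employ disjoint service names ($a$ versus $b$), each \Did{G-Init} introduces a freshly generated session channel, and their only shared resource is the located variable $r@B$; hence their session initiations and in-session exchanges commute freely under \Did{G-Par} and \Did{G-Struct}, and the value of $r@B$ witnessed by each conditional test of \textsf{Append} can be arranged to be any $i_j\in\{1,\dots,n\}$ by scheduling the appropriate summand of \textsf{Random} immediately before.
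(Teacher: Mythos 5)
Your proposal is correct and follows essentially the same route as the paper: a reduction from PCP using exactly the encoding of Definition~\ref{def:PCPencoding}, establishing that $\chor |=_\sigma \phi$ holds iff the PCP instance has a solution. You work out both directions (the explicit lock-step scheduling of \textsf{Random} and \textsf{Append}, and the reading-off of branch indices from a satisfying trace) in considerably more detail than the paper's brief sketch, which only outlines the direction from satisfaction to a PCP solution.
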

\begin{proof} 
  (Sketch) The statement $\chor |=_\sigma \phi$ holds iff the encoded
  PCP has a solution.  Indeed, if the initial configuration
  $(\sigma,\chor)$ satisfies the formula $\phi$ then it means there
  exists a configuration $(\sigma',\chor')$ where $(str1@A = str2@A)
  \land (str1@A \neq \epsilon) \land (str2@A \neq \epsilon)$
  holds. Hence, there is a sequence of $i_0,\dots,i_k$ such that $str1
  = s_{i_0}\cdot \ldots \cdot s_{i_k} = t_{i_0}\cdot \ldots \cdot
  t_{i_k} = str2$, that is, the instance of PCP has a solution.
\end{proof}

\begin{remark}\label{remark:vars}
  The undecidability result presented in this section shows that the
  global calculus is considerably expressive, despite the choreography
  approach offers a simplification in the specification of concurrent
  communicating systems as argued in \cite{carbone7scc}. The encoding
  in Definition~\ref{def:PCPencoding} shows that allowing state
  variables (hence local variables that can be accessed by various
  threads) increases the expressive power of the language. Indeed, we
  could just look at GC as a simple concurrent language with a
  ``shared'' store where assignment to variables is just in-session
  communication. In this view, we conjecture that removing variables
  and focusing only on communication would make the logic decidable.
\end{remark}


\section{Proof System for Recursion-free Choreographies}\label{sec:proofSys}

In this section, we present a model checking algorithm (in the form of
a proof system) to decide whether a global logic formula is satisfied by
a recursion-free configuration of the global calculus. Indeed,
similarly to \cite{ct:csl01}, it turns out that the logic is decidable
on the recursion-free choreographies.\footnote{Removing recursion
  yields a decidability result orthogonal to the conjecture formulated
  in Remark~\ref{remark:vars}} We also prove the soundness and
completeness of the proposed proof system w.r.t.~the assertion
semantics.

In order to reason about judgments $\chor |=_\sigma \phi$, we propose
a proof (or inference) system for assertions of the form $\chor
|-_\sigma \phi$.  Intuitively, we want $\chor |-_\sigma \phi$ to be as
approximate as possible to $\chor |=_\sigma \phi$ (ideally, they
should be equivalent).  We write $\chor |-_\sigma \phi $ for the
provability judgement where $(\sigma,\chor)$ is a configuration and
$\phi$ is a formula.  

\begin{notation}
  We define the set of continuations configuration after an action
  $\ell$ and the reachable configurations, both starting from a
  configuration $(\sigma, \chor)$, as follows:
  \begin{align*}
    \textsf{Next}(\sigma,\chor,\ell) & \DEFEQ
    \{(\sigma',\chor') \mid (\sigma,\chor) \action{\ell} (\sigma',\chor')\}
    \\
    \textsf{Reachable}(\sigma,\chor) & \DEFEQ
    \{(\sigma',\chor') \mid (\sigma,\chor) \action{}^* (\sigma',\chor')\}
    \, .
  \end{align*}
Normalisation is required by the proof system to infer equality of
choreographies up to structural equi\-va\-len\-ce (Especially for the
$[ \cdot ] \pp [ \cdot ]$ operator).
  We define $\textsf{Norm}(\chor)$ to be a normalisation function from
  recursion-free choreographies into multi-sets of choreographies:
  \begin{align*}
    & 
    \textsf{Norm}(\interact{A}{B}{k}{e}{y}\pfx \chor) \DEFEQ
    [\interact{A}{B}{k}{e}{y}\pfx \chor]
    \qquad
    \textsf{Norm}(\choice{A}{B}{k}{l}{\chor}) \DEFEQ
    [\choice{A}{B}{k}{l}{\chor}]
    \\
     & \textsf{Norm}(\init{A}{B}{a}{k}\pfx \chor) \DEFEQ
    [\init{A}{B}{a}{k}\pfx \chor]
    \qquad\!\!\!
    \textsf{Norm}(\ifthenelse{e@A}{\chor_1}{\chor_2}) \DEFEQ
    [\ifthenelse{e@A}{\chor_1}{\chor_2}]    
    \\
    &
    \textsf{Norm}(\INACT) \DEFEQ [\ ]
    \qquad
    \textsf{Norm}(\chor_1 \pp \chor_2) \DEFEQ [P_1,\dots,P_n,Q_1,\dots,Q_m]
    \quad \text{if }
    \begin{array}{ll}
      \textsf{Norm}(\chor_1) = [P_1,\dots,P_n] & \text{and} \\
      \textsf{Norm}(\chor_2) = [Q_1,\dots,Q_m] & .
    \end{array}
  \end{align*}
\end{notation}

\begin{lemma}[Normalisation preserves structural equivalence]
  \label{lem:normalization}
  Let $\chor$ be a recursion-free choreography and
  $\textsf{Norm}(\chor) = [P_1,\dots,P_n]$, then $\chor \equiv
  \prod_{i=1}^n P_i$.
\end{lemma}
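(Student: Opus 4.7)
The plan is a straightforward structural induction on the recursion-free choreography $\chor$. Since $\chor$ contains no $\mu X \pfx C$ or $X$ subterms, the induction covers only the six syntactic cases on which $\textsf{Norm}$ is explicitly defined: $\INACT$, the three prefix forms (initiation, in-session communication, labelled choice), the conditional, and parallel composition. Throughout, I will rely on the fact, established in Section~\ref{sec:globalCalc}, that $\equiv$ makes $(\pp, \INACT)$ a commutative monoid; this in particular makes the notation $\prod_{i=1}^n P_i$ well-defined up to $\equiv$ regardless of the order in which the multiset is enumerated, and fixes the empty product to be $\INACT$.

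For the base case $\chor = \INACT$, we have $\textsf{Norm}(\INACT) = [\,]$, so $n = 0$ and by convention $\prod_{i=1}^0 P_i = \INACT$, giving $\chor \equiv \INACT$ by reflexivity of $\equiv$. For each of the prefix cases $\init{A}{B}{a}{k} \pfx \chor'$, $\interact{A}{B}{k}{e}{y} \pfx \chor'$, $\choice{A}{B}{k}{l}{\chor'}$, and the conditional $\ifthenelse{e@A}{\chor_1}{\chor_2}$, the definition of $\textsf{Norm}$ returns the singleton multiset $[\chor]$; thus $n = 1$ and $\prod_{i=1}^1 P_i = \chor$, again by reflexivity. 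No induction hypothesis is needed here because $\textsf{Norm}$ does not recurse into the continuations of prefixes, nor into the branches of the conditional.

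The only inductive step is $\chor = \chor_1 \pp \chor_2$. Let $\textsf{Norm}(\chor_1) = [P_1, \dots, P_n]$ and $\textsf{Norm}(\chor_2) = [Q_1, \dots, Q_m]$; by the induction hypothesis, $\chor_1 \equiv \prod_{i=1}^n P_i$ and $\chor_2 \equiv \prod_{j=1}^m Q_j$. Since $\equiv$ is a congruence with respect to $\pp$, we obtain $\chor_1 \pp \chor_2 \equiv (\prod_{i=1}^n P_i) \pp (\prod_{j=1}^m Q_j)$, and by associativity of $\pp$ (part of the commutative monoid laws) the right-hand side is $\equiv$ to $\prod_{k=1}^{n+m} R_k$, where $R_1,\dots,R_{n+m}$ is the concatenated list $[P_1,\dots,P_n,Q_1,\dots,Q_m] = \textsf{Norm}(\chor_1 \pp \chor_2)$, which is exactly what is required.

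There is no real obstacle; the only thing that needs care is the interpretation of $\prod_{i=1}^n P_i$ and its empty-product convention, which must be fixed at the outset so that both the $\INACT$ case and the inductive step go through cleanly. The lemma is essentially a soundness check that $\textsf{Norm}$ faithfully flattens nested parallel compositions with respect to the commutative monoid structure of $\equiv$.
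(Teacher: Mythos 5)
Your proof is correct and follows essentially the same route as the paper's: structural induction on $\chor$, with $\INACT$ giving the empty product, all prefix and conditional forms giving singletons handled by reflexivity, and the parallel case combining the induction hypotheses via the congruence and monoid laws for $\pp$. Your version is merely more explicit about the empty-product convention and the use of associativity, which the paper leaves implicit.
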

\begin{proof}
  By induction on the structure of the choreography $\chor$.
  \begin{description}
  \item[Case $\chor = \INACT$:] We have $\textsf{Norm}(\INACT) = [\
    ]$, and $\prod_{i=1}^0 P_i = \INACT \equiv \INACT$.
  \item[Case $\chor = \chor_1 \pp \chor_2$:] We have that
    $\textsf{Norm}(\chor_1) = [P_1,\dots,P_n]$,
    $\textsf{Norm}(\chor_2) = [Q_1,\dots,Q_m]$, and $\prod_{i=1}^n P_i
    \equiv \chor_1$, $\prod_{j=1}^m Q_j \equiv \chor_2$ by induction
    hypothesis. Then, we can derive that $\prod_{i=1}^n P_i \pp
    \prod_{j=1}^m Q_j \equiv \chor_1 \pp \chor_2$.
  \item[All the other cases:] Trivially we have that
    $\textsf{Norm}(\chor) = [P_1]$, where $P_1 = \chor$, then
    $\prod_{i=1}^1 P_i \equiv \chor$. \qedhere
  \end{description}

\end{proof}

\begin{definition}[Entailment]
  We say that a choreography $\chor$ \emph{entails} a formula $\phi$
  under a state $\sigma$, written $\chor|-_\sigma \phi$, iff the
  assertion $\chor|-_\sigma \phi$ has a proof in the proof system
  given in Table~\ref{table:Global:proofSys}.
\end{definition}

\begin{table}
  \begin{gather*}
    \myruleg{P_{end}}{\textsf{Norm}(\chor) = [\ ]}
    {\typerule{\chor}{\sigma}{\endF}}
    \qquad\qquad
    \myruleg{P_{and}}{\typerule{\chor}{\sigma}{\phi} \quad
      \typerule{\chor}{\sigma}{\chi}} {\typerule{\chor}{\sigma}{\phi
        \land \chi}}
    \qquad\qquad
    \myruleg{P_{neg}}{ \chor \not |-_{\sigma} \phi}
    {\typerule{\chor}{\sigma}{\neg \phi}}
    \\[1ex]
    \myruleg{P_{par}}{
        \textsf{Norm}(\chor) = [P_1,\dots,P_n] \quad
        \exists I,J.\
        I\cup J = \{1,\dots,n\} \wedge
        I\cap J = \emptyset \wedge
        \typerule{\prod_{i\in I} P_i}{\sigma}{\phi_1} \wedge
        \typerule{\prod_{j\in J} P_j}{\sigma}{\phi_2}
     }
    {\typerule{\chor}{\sigma}{ \phi_1 \pp \phi_2}}
    \\[1ex]
    \myruleg{P_{action}}{\exists (\sigma',\chor') \in
      \textsf{Next}(\sigma,\chor,\ell).\  \typerule{\chor'}{\sigma'}{\phi}}
    {\typerule{\chor}{\sigma}{\langle\ell \rangle \phi}}
    \qquad\qquad
    \myruleg{P_{may}}{\exists (\sigma',\chor') \in
      \textsf{Reachable}(\sigma,\chor).\  \typerule{\chor'}{\sigma'}{\phi}}
    {\typerule{\chor}{\sigma}{\may \phi}}
    \\[1ex]
    \myruleg{P_{\exists}}
    {\exists w\in fn(\chor)\cup fn(\phi).\ 
      \typerule{\chor}{\sigma}{\phi[w/\var]}}
    {\typerule{\chor}{\sigma}{\exists \var \pfx \phi}}
    \qquad\qquad
    \myruleg{P_{exp}}{
      \sigma(e_1 @ A) \Downarrow v \quad \sigma(e_2 @ B) \Downarrow v}
    {\typerule{\chor}{\sigma}{ (e_1 @ A = e_2 @ B) }}
  \end{gather*}
  \caption{Proof system for the Global Calculus.}
  \label{table:Global:proofSys}
\end{table}

Let us now describe some of the inference rules of the proof system.
The rule $\mathsf{P_{end}}$ relates the inaction terms with the
termination formula. The rules $\mathsf{P_{and}}$ and
$\mathsf{P_{neg}}$ denote rules for conjunction and negation in
classical logic, respectively. The rule for parallel composition is
represented in $\mathsf{P_{par}}$; it does not indicate the behaviour
of a given choreography, but hints information about the structure of
the process: $\mathsf{P_{par}}$ juxtaposes the behaviour of two
processes and combines their respective formulae by the use of a
separation operator. The next rule, $\mathsf{P_{action}}$ requires
that the process $P$ in the configuration $\sigma$ can perform an
action labelled $\ell$, so we must search for a continuations of
$(\sigma,\chor)$ after an action $\ell$ and find a configuration which
satisfies the rest of the formula, i.e., $\phi$.  Analogously,
$\mathsf{P_{may}}$ looks for a continuation in the reachable
configuration of $(\sigma,\chor)$ in oder to satisfy $\phi$.  The rule
$\mathsf{P_\exists}$ says that in order to satisfy an $\exists t\pfx
\phi$, it is sufficient to find a value $w$ for $t$ in the free names
used by the choreography $\chor$ or in the free names used by the
formula $\phi$. Finally, the rule $\mathsf{P_{exp}}$ denotes
evaluation of expressions.

We now proceed to prove the soundness of the proof system with respect
to the semantics of assertions presented before.

\begin{lemma}[Structural congruence preserves satisfability]
  \label{lemma:StructuralSatisfability} If $\chor\equiv\chor'$ and
  $\chor|=_\sigma \phi$, then $\chor' |=_\sigma \phi$.
\end{lemma}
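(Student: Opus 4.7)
The plan is to prove the lemma by structural induction on the formula $\phi$, invoking the fact that $\equiv$ is an equivalence relation (in particular symmetric and transitive) and that the \Did{G-Struct} rule of the labelled transition semantics absorbs structural congruence into transitions. I would first observe that for the base cases that do not mention the choreography in an essential way, the result is immediate: the equality case $(e_1@A = e_2@B)$ depends only on $\sigma$, and $\endF$ reduces to $\chor \equiv \INACT$, which by transitivity of $\equiv$ gives $\chor' \equiv \INACT$ whenever $\chor \equiv \chor'$.

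Next I would handle the propositional connectives. For conjunction $\phi \land \chi$, the induction hypothesis applied to each conjunct suffices. For negation $\neg \phi$, I would argue by contraposition: suppose $\chor' \models_\sigma \phi$; then by symmetry $\chor' \equiv \chor$, and the induction hypothesis on $\phi$ gives $\chor \models_\sigma \phi$, contradicting $\chor \models_\sigma \neg \phi$. For the existential $\exists \var \pfx \phi$, the witness $w$ that works for $\chor$ works for $\chor'$ by the induction hypothesis applied to $\phi[w/\var]$.

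The modal cases $\langle \ell \rangle \phi$ and $\may \phi$ are where the LTS interacts with $\equiv$. Assume $\chor \models_\sigma \langle \ell \rangle \phi$, i.e., $(\sigma,\chor) \action{\ell} (\sigma',\chor'')$ with $\chor'' \models_{\sigma'} \phi$. Since $\chor' \equiv \chor$, an application of \Did{G-Struct} (with $\chor'' \equiv \chor''$ as the right-hand congruence) yields $(\sigma,\chor') \action{\ell} (\sigma',\chor'')$, and the same $\chor''$ witnesses the satisfaction. The $\may \phi$ case is analogous, lifted along the reflexive–transitive closure $\action{}^*$. For the spatial case $\phi \pp \chi$, if $\chor \equiv \chor_1 \pp \chor_2$ with $\chor_i$ satisfying the respective conjuncts, then by transitivity $\chor' \equiv \chor_1 \pp \chor_2$, so the very same decomposition witnesses $\chor' \models_\sigma \phi \pp \chi$.

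The main obstacle is conceptual rather than technical: one must ensure that structural congruence is genuinely compatible with the LTS, which is precisely why \Did{G-Struct} is included as a rule of the operational semantics. Without it, the modal cases would fail, since a syntactic rearrangement of $\chor$ (for instance under associativity and commutativity of $\pp$, or under unfolding of $\mu X.\chor$) could in principle block a transition that was available on the original term. Because \Did{G-Struct} is built into $\action{\ell}$, the induction goes through essentially by unwinding the definition of $\models_\sigma$ and using transitivity of $\equiv$; no further properties of the calculus are needed.
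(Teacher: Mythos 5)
Your proof is correct and follows exactly the route the paper takes: the paper's own proof is a one-line sketch (``by structural induction over $\phi$''), and your case analysis fills in precisely that induction, with the key observation that the modal cases go through because \Did{G-Struct} absorbs $\equiv$ into the transition relation. No discrepancies with the paper's intended argument.
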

\begin{proof}
  (Sketch) It follows from structural induction over $\phi$.
\end{proof}

\begin{theorem}[Soundness]\label{thm:soundness}
  For any configuration $(\sigma,\chor)$, where $\chor$ is
  recursion-free, and every formula $\phi$, if $\chor|-_\sigma \phi$
  then $\chor|=_\sigma \phi$.
\end{theorem}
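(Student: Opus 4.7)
The plan is to proceed by structural induction on the formula $\phi$, since the proof system in Table~\ref{table:Global:proofSys} is syntax-directed on the formula. However, rule $\mathsf{P_{neg}}$ has a non-derivability premise ($\chor \not |-_\sigma \phi$), so soundness of negation requires the \emph{converse} direction for the immediate subformula. I therefore strengthen the induction hypothesis to the biconditional ``$\chor |-_\sigma \phi$ iff $\chor |=_\sigma \phi$'' and prove soundness and completeness jointly; Theorem~\ref{thm:soundness} is then recovered as one direction of the induction.

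For most connectives the inductive step is a direct translation between rule premises and semantic clauses. The case $\phi = \endF$ uses Lemma~\ref{lem:normalization} to pass from $\textsf{Norm}(\chor) = [\ ]$ to $\chor \equiv \INACT$, and Lemma~\ref{lemma:StructuralSatisfability} then yields $\chor |=_\sigma \endF$. Equality ($\mathsf{P_{exp}}$) is immediate, as both relations condition on the same evaluations $\sigma(e_1@A) \Downarrow v$ and $\sigma(e_2@B) \Downarrow v$. Conjunction ($\mathsf{P_{and}}$) applies the IH twice. For $\langle\ell\rangle\phi$ and $\may\phi$, both the rule premise and the semantic clause assert the existence of a (one-step, respectively reachable) continuation entailing $\phi$, so the IH bridges them directly.

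The delicate cases are parallel composition and existential quantification. For $\phi_1 \pp \phi_2$, the premise of $\mathsf{P_{par}}$ supplies a partition $I \cup J = \{1,\dots,n\}$ of the normal form $[P_1,\dots,P_n]$ with $\prod_{i \in I} P_i |-_\sigma \phi_1$ and $\prod_{j \in J} P_j |-_\sigma \phi_2$; by Lemma~\ref{lem:normalization} we have $\chor \equiv (\prod_{i\in I} P_i) \pp (\prod_{j\in J} P_j)$, the IH gives satisfaction componentwise, and Lemma~\ref{lemma:StructuralSatisfability} transports the decomposition back to $\chor$ to witness the semantic clause for $\phi_1 \pp \phi_2$. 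For $\exists \var \pfx \phi$, the rule $\mathsf{P_{\exists}}$ restricts witnesses to $fn(\chor) \cup fn(\phi)$, while the semantics admits arbitrary appropriate $w$; the soundness direction is immediate (any rule witness is a semantic witness), whereas the completeness direction requires showing that an external semantic witness can always be replaced by an internal one, which follows from the equivariance of the LTS and of $|=_\sigma$ under renaming of names not occurring in $fn(\chor) \cup fn(\phi)$.

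The main obstacle is thus rule $\mathsf{P_{neg}}$: it forces the joint soundness/completeness strengthening, which in turn obliges every case to be carried out in both directions. In particular, the completeness half of $\mathsf{P_{\exists}}$ rests on the equivariance argument sketched above, and this is the point where the proof must be spelled out carefully. Recursion-freeness of $\chor$ plays no direct role in the inductive step itself, but ensures that $\textsf{Next}(\sigma,\chor,\ell)$ and $\textsf{Reachable}(\sigma,\chor)$ are finite, so that each rule premise is a decidable check, which is what makes the joint induction go through without circularity.
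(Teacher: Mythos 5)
Your proposal is correct and, in one important respect, more careful than the paper's own argument. The paper proves soundness by induction on the derivation of $\chor |-_\sigma \phi$ and completeness (Theorem~\ref{thm:completeness}) as a separate, subsequent induction; in the paper's soundness case for $\mathsf{P_{neg}}$, the step from $\chor \not|-_\sigma \phi$ to $\chor \not|=_\sigma \phi$ is justified ``by induction hypothesis'', but that implication is the contrapositive of the \emph{completeness} direction, which the soundness IH alone does not supply --- symmetrically, the paper's completeness case for negation appeals back to Theorem~\ref{thm:soundness}. You resolve this mutual dependency explicitly by strengthening to the biconditional and running a single structural induction on $\phi$; this is the standard repair and makes the argument manifestly non-circular (it also gives the stratification needed for $|-_\sigma$ to be well-defined in the presence of the negative premise). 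The rest of your case analysis matches the paper's: $\endF$ via Lemmas~\ref{lem:normalization} and~\ref{lemma:StructuralSatisfability}, the modalities via $\textsf{Next}$ and $\textsf{Reachable}$, and the existential via restriction of witnesses to $fn(\chor)\cup fn(\phi)$ --- your equivariance argument is exactly what the paper isolates as Lemma~\ref{lem:exists}. Two minor remarks: (i) the completeness half of your parallel case still needs an argument that the semantic decomposition $\chor \equiv \chor_1 \pp \chor_2$ can be realigned with a partition of the multiset $\textsf{Norm}(\chor)$ (the paper glosses over this as well); (ii) decidability of the premises is not what makes the joint induction well-founded --- structural descent on $\phi$ is --- although finiteness of $\textsf{Next}$ and $\textsf{Reachable}$ is indeed what makes the system effective, as the paper records separately in Theorem~\ref{thm:termination}.
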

\begin{proof}
  It follows by induction on the derivation of $|-_\sigma$.
 \begin{description}
 \item[Case $\mathsf{P_{end}}$:] Straight consequence of
   Lemmas~\ref{lem:normalization} and
   \ref{lemma:StructuralSatisfability}, indeed $\chor \equiv \INACT$
   and $\chor |=_\sigma \endT$.
 \item[Case $\mathsf{P_{and}}$:] By induction hypothesis and
   conjunction.
 \item[Case $\mathsf{P_{neg}}$:] We have that $\chor|-_\sigma \lnot
   \phi$, so by $\mathsf{P_{neg}}$ we get $\chor \not |-_\sigma
   \phi$. By induction hypothesis we have that $\chor \not |=_\sigma
   \phi$, which is the necessary condition to deduce $\chor |=_\sigma
   \lnot \phi$.
 \item[Case $\mathsf{P_{par}}$:] We have that $\chor |-_\sigma
   \phi_1\pp \phi_2$, then $\textsf{Norm}(\chor) = [P_1,\dots,P_n]$,
   and there exist $I,J$ such that $I\cup J = \{1,\dots,n\}$, $I\cap J
   = \emptyset$, $\typerule{\prod_{i\in I} P_i}{\sigma}{\phi_1}$, and
   $\typerule{\prod_{j\in J} P_j}{\sigma}{\phi_2}$. By induction
   hypothesis we know that $\prod_{i\in I} P_i |=_\sigma \phi_1$ and
   $\prod_{j\in J} P_j |=_\sigma \phi_2$, then by
   Lemma~\ref{lem:normalization} we have $\chor \equiv \prod_{i\in I}
   P_i \pp \prod_{j\in J} P_j$, hence it is immediate to prove that
   $\chor |=_\sigma \phi_1 \pp \phi_2$.
 \item[Case $\mathsf{P_{action}}$:] We have that $\chor |-_\sigma
   \langle \ell \rangle \phi$ and by $\mathsf{P_{action}}$ then
   $\chor' |-_{\sigma'} \phi$ and $(\sigma', \chor') \in
   \textsf{Next}(\sigma, \chor, \ell)$. From the induction hypothesis
   we have that $\chor' |=_{\sigma'} \phi$, then we have to show that
   $\chor |=_\sigma \rangle \ell \langle \phi$. From the assertion
   semantics we know that $C |=_\sigma \langle \ell \rangle \phi$ iff
   $(\sigma, \chor') \action{\ell} (\sigma', \chor')$ and $\chor'
   |=_{\sigma'} \phi$, which holds immediately by the selection of
   $(\sigma', \chor') \in \textsf{Next}(\sigma, \chor,\ell)$ and the
   induction hypothesis.
 \item[Case $\mathsf{P_{may}}$:] We have that $\chor |-_\sigma \may
   \phi$ and by $\mathsf{P_{may}}$ then $\chor' |-_{\sigma'} \phi$ and
   $(\sigma', \chor') \in \textsf{Reachable}(\sigma, \chor)$. From the
   induction hypothesis we have that $\chor' |=_{\sigma'} \phi$, then
   we have to show that $\chor |=_\sigma \may \phi$. From the
   assertion semantics we know that $C |=_\sigma \may \phi \iff
   (\sigma, \chor') \action{} ^* (\sigma', \chor')$ and $\chor'
   |=_{\sigma'} \phi$, which holds immediately by the selection of
   $(\sigma', \chor') \in \textsf{Reachable}(\sigma, \chor)$ and the
   induction hypothesis.
 \item[Case $\mathsf{P_{\exists}}$:] We have that $\chor |-_\sigma
   \exists t. \phi$ and by $\mathsf{P_{\exists}}$ we have that
   $\exists w \in fn(\chor) \cup fn(\phi)$ and $\chor |-_\sigma \phi
   [w/t]$. By induction hypothesis we know that $C |=_\sigma \phi
   [w/t]$ with appropriate $w \in fn(\chor) \cup fn(\phi)$, then
   $\chor |=_\sigma \exists t. \phi$ follows from the definition of
   the assertion semantics.
 \item[Case $\mathsf{P_{exp}}$:] It holds trivially by checking if
   $\sigma(e_1@A) \Downarrow v$ and $\sigma(e_2@B) \Downarrow
   v$. \qedhere
 \end{description}
\end{proof}

\begin{lemma}\label{lem:exists}
  For every configuration $(\sigma,\chor)$, where $\chor$ is recursion
  free, and every formula $\exists t\pfx \phi$, if $\{n_1,\dots,n_k\}
  = fn(\chor) \cup fn(\phi)$, then $\chor |=_\sigma \exists t\pfx
  \phi$ iff $\exists m\in \{n_1,\dots,n_k\}$ such that $\chor
  |=_\sigma \phi[m/t]$.
\end{lemma}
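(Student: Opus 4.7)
The plan is to prove both directions of the biconditional separately, with the reverse direction being essentially immediate and the forward direction requiring structural induction on $\phi$.

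For the $\Leftarrow$ direction, suppose $m \in \{n_1,\dots,n_k\}$ satisfies $\chor \models_\sigma \phi[m/t]$. Then by the semantic clause for $\exists$ in Table~\ref{table:global:assertions}, picking $w = m$ as witness immediately yields $\chor \models_\sigma \exists t \pfx \phi$, so nothing else is needed here.

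For the $\Rightarrow$ direction, assume $\chor \models_\sigma \exists t \pfx \phi$, i.e., there exists some appropriate $w$ such that $\chor \models_\sigma \phi[w/t]$. I would proceed by structural induction on $\phi$, showing that the witness $w$ may be replaced by some $m \in fn(\chor) \cup fn(\phi)$. The guiding intuition is that $\models_\sigma$ only inspects $w$ through syntactic comparisons with names already occurring in $\chor$, $\phi$, or the store $\sigma$ (whose domain is restricted to variables mentioned in $\chor$). For $\endF$ the witness is irrelevant. For equality $e_1@A = e_2@B$, both sides evaluate in a store confined to $fn(\chor)$, so either $w$ is semantically irrelevant or it must coincide with an existing name. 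For the boolean connectives and nested $\exists$, the induction hypothesis gives the result directly. For the parallel case $\phi_1 \pp \phi_2$, apply the hypothesis on each component after decomposing $\chor$ via Lemma~\ref{lem:normalization}. For $\langle \ell \rangle \psi$ and $\may \psi$, exploit that $\chor$ is recursion-free so $\textsf{Next}(\sigma,\chor,\ell)$ and $\textsf{Reachable}(\sigma,\chor)$ are finite sets whose labels and configurations involve only finitely many names drawn from $fn(\chor)$ (plus, for $\Did{G-Init}$, a fresh session channel).

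The main obstacle is the action modality $\langle \initF{A}{B}{a(k)} \rangle \psi$ when $t$ instantiates the session channel $k$: rule $\Did{G-Init}$ fires with a freshly chosen $h$, and the match succeeds precisely when $w = h$, so $w$ need not be in $fn(\chor) \cup fn(\phi)$. I would address this by an equivariance argument: because $h$ is chosen fresh, any alpha-equivalent choice is semantically indistinguishable, so one can rename $h$ to coincide with a suitable element of $fn(\chor) \cup fn(\phi)$ (adding the bound $k$ from the formula itself, which is always in $fn(\phi)$) without affecting satisfaction, then apply Lemma~\ref{lemma:StructuralSatisfability} to transfer satisfaction across the structurally congruent configuration. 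Combined with the induction hypothesis on $\psi$, this yields a witness $m \in fn(\chor) \cup fn(\phi)$ and completes the step.
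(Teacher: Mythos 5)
Your strategy coincides with the paper's: its proof is only a two-line sketch (``by induction on the structure of $\phi$, similar to Cardelli and Gordon's Lemma~5.3(3)''), and you carry out precisely that induction. You even isolate and repair a subtlety the paper never mentions --- the freshly generated session channel in \Did{G-Init} --- via an equivariance argument; that part is fine, as is the immediate $\Leftarrow$ direction.

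The genuine gap is the negation case, which you dispatch with ``for the boolean connectives \dots\ the induction hypothesis gives the result directly.'' It does not: the induction hypothesis bounds the witnesses of $\exists t\pfx \psi$, and this gives no control over the witnesses of $\exists t\pfx \neg\psi$. Your guiding intuition --- a witness is either compared against an occurring name or is semantically irrelevant --- breaks under negation, where the formula may hold exactly of the names that do \emph{not} occur. For instance, against $\chor=\interact{A_0}{B}{k}{e}{x}\pfx\INACT\pp\interact{B}{B}{k'}{e'}{x'}\pfx\INACT$, the formula $\exists X\pfx\big(\neg\langle\comF{X}{B}{k}\rangle\true\land\neg\langle\comF{X}{B}{k'}\rangle\true\big)$ is satisfied by every participant \emph{other than} $A_0$ and $B$, hence by no name in $fn(\chor)\cup fn(\phi)$, even though it is satisfiable with a fresh witness. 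This is exactly why the cited Cardelli--Gordon lemma enlarges the candidate set to $fn(\chor)\cup fn(\phi)\cup\{m'\}$ for one additional \emph{fresh} name $m'$, and why its real content is an auxiliary lemma --- satisfaction is invariant under swapping two names that are fresh for $\chor$, $\sigma$ and $\phi$ --- which is what is actually proved by induction on $\phi$ (an ``iff''-shaped invariance commutes with negation, so that induction goes through). With that lemma in hand, a witness is either occurring or fresh, and in the latter case may be renamed to $m'$. So your proof needs (i) that interchangeability lemma stated and proved separately, and (ii) the extra fresh name added to the candidate set; note that defect (ii) is already present in the paper's statement, whose sketch silently relies on the corrected form of the lemma it cites.
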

\begin{proof}
  (Sketch) By induction on the structure of $\phi$.  It is similar to
  the proof of \cite[Lemma~5.3(3)]{cg:popl00}.
\end{proof}

\begin{theorem}[Completeness]\label{thm:completeness}
  For any configuration $(\sigma,\chor)$, where $\chor$ is
  recursion-free, and every formula $\phi$, if $\chor|=_\sigma \phi$
  then $\chor|-_\sigma \phi$.
\end{theorem}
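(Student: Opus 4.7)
The plan is to proceed by structural induction on the formula $\phi$, pairing each clause of the semantics in Table~\ref{table:global:assertions} with its corresponding rule in Table~\ref{table:Global:proofSys}. For the cases $\endF$, $e_1@A = e_2@B$, $\phi \land \chi$, $\langle \ell \rangle \phi$ and $\may \phi$, the side-conditions of the proof rules are essentially carbon copies of the semantic clauses, so the induction hypothesis closes each case; for $\endF$ I additionally invoke Lemma~\ref{lem:normalization} to translate $\chor \equiv \INACT$ into $\textsf{Norm}(\chor) = [\ ]$. The negation case follows from soundness (Theorem~\ref{thm:soundness}) contrapositively: from $\chor |=_\sigma \neg \phi$ we get $\chor \not|=_\sigma \phi$, whence soundness gives $\chor \not|-_\sigma \phi$, and rule $\mathsf{P_{neg}}$ fires.

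The existential case $\exists \var \pfx \phi$ is where Lemma~\ref{lem:exists} is essential: the semantics permits an arbitrary appropriate witness $w$, whereas rule $\mathsf{P_\exists}$ only searches within the finite syntactic set $fn(\chor) \cup fn(\phi)$. Lemma~\ref{lem:exists} precisely bridges this gap, reducing the semantic witness to a syntactic one, after which the induction hypothesis on $\phi[w/\var]$ finishes the case.

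The hard part will be the spatial composition $\phi \mid \chi$. From the semantics we obtain some $\chor \equiv \chor_1 \pp \chor_2$ with $\chor_1 |=_\sigma \phi$ and $\chor_2 |=_\sigma \chi$, but rule $\mathsf{P_{par}}$ instead demands a partition $(I,J)$ of the multiset $\textsf{Norm}(\chor)$. To bridge this I would first prove an auxiliary lemma stating that $\textsf{Norm}$ is invariant under structural congruence, so that $\chor \equiv \chor_1 \pp \chor_2$ yields the multiset identity $\textsf{Norm}(\chor) = \textsf{Norm}(\chor_1) \uplus \textsf{Norm}(\chor_2)$ by the definition of $\textsf{Norm}$ on parallel compositions. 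This invariance proof is a routine induction over the commutative-monoid axioms of $\pp$ and $\INACT$. With this lemma in hand, I take $I$ and $J$ to index the two sub-normalisations, apply Lemma~\ref{lem:normalization} together with Lemma~\ref{lemma:StructuralSatisfability} to transport satisfaction of $\phi$ and $\chi$ across structural congruence onto the products $\prod_{i\in I} P_i$ and $\prod_{j\in J} P_j$, and then appeal to the induction hypothesis on each. The normalization-respects-$\equiv$ lemma is where most of the real work lives; everything else is bookkeeping.
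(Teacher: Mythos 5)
Your proposal follows essentially the same route as the paper's own proof: structural induction on $\phi$, with the negation case discharged contrapositively via soundness (Theorem~\ref{thm:soundness}), the existential case via Lemma~\ref{lem:exists}, and the remaining propositional and modal cases read off directly from the matching proof rules. The one place you genuinely diverge is the spatial case $\phi \mid \chi$. The paper applies Lemma~\ref{lem:normalization} to $\chor_1$ and $\chor_2$ separately and then invokes $\mathsf{P_{par}}$ without checking that the resulting components form a partition of $\textsf{Norm}(\chor)$ --- i.e.\ it silently assumes exactly the ``$\textsf{Norm}$ respects $\equiv$'' lemma you propose to prove --- and it applies the induction hypothesis to $\chor_1,\chor_2$ rather than to the products $\prod_{i\in I}P_i$ and $\prod_{j\in J}P_j$ that the rule actually mentions. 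Your version, which transports satisfaction across $\equiv$ via Lemma~\ref{lemma:StructuralSatisfability} and only then applies the induction hypothesis to the products, closes both gaps. One caveat: the identity $\textsf{Norm}(\chor) = \textsf{Norm}(\chor_1) \uplus \textsf{Norm}(\chor_2)$ does not hold on the nose, because $\equiv$ is a congruence and may rewrite under prefixes (e.g.\ an $\INACT$ absorbed inside a communication prefix changes the syntactic element but not its $\textsf{Norm}$-singleton status), so the two multisets agree only up to an element-wise $\equiv$-matching bijection; state your auxiliary lemma in that form and the rest of your argument goes through unchanged, using Lemma~\ref{lemma:StructuralSatisfability} once more to move satisfaction across that bijection.
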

\begin{proof}
  By rule induction on the derivation of $|=_\sigma$.
  \begin{description}
  \item[Case $\chor |=_\sigma \endT$:] We have that $\chor \equiv
    \INACT$ and hence $\textsf{Norm}(\chor) = [\ ]$ by
    Lemma~\ref{lem:normalization}. Now, the thesis follows immediately
    from the application of $ \mathsf{P_{end}}$.
  \item[Case $\chor |=_\sigma (e_1 @ A = e_2 @ B)$:] It follows
    immediately by the application of $\mathsf{P_{exp}}$.
  \item[Case $\chor |=_\sigma \langle \ell \rangle \phi'$:] Take
    $(\sigma, \chor) \action{\ell} (\sigma',\chor')$ and $\chor'
    |=_{\sigma'} \phi' $, we have by induction hypothesis that
    $\typerule{\chor'}{\sigma'}{\phi'}$. Now, we have to show that
    $\chor |-_\sigma \langle \ell \rangle \phi'$.  By the fact that
    $(\sigma, \chor) \action{\ell} (\sigma',\chor')$, we have that
    $(\sigma', \chor')\in \textsf{Next}(\sigma,\chor,\ell)$, hence, we
    can apply rule $\mathsf{P}_{action}$ and we are done.
  \item[Case $\chor |=_\sigma \phi \land \chi$:] We have that $\chor
    |=_\sigma \phi$ and $\chor |=_\sigma \chi$. From the induction
    hypothesis we have that $\chor |-_\sigma \phi$ and $ \chor
    |-_\sigma \chi$. The application of $\mathsf{P_{and}}$ lead to
    $\chor |-_\sigma \phi \land \chi$ as desired.
  \item[Case $\chor |=_\sigma \lnot \phi$:] From the definition of the
    assertion semantics we have that $\chor |=_\sigma \lnot \phi $ iff
    $\chor \not |=_\sigma \phi$. We have to show that $\chor |-_\sigma
    \lnot \phi$. We proceed by contradiction. Take a $(\phi, \chor)$
    such that $\chor |-_\sigma \phi$, then from
    Theorem~\ref{thm:soundness} we have that $\chor |=_\sigma \phi$,
    which is a contradiction to $\chor |=_\sigma \lnot \phi$.
  \item[Case $\chor |=_\sigma \exists \var \pfx \phi$:] We have that
    $\chor |=_\sigma \exists t. \phi$ and by the definition in the
    assertion semantics we have that $\chor |=_\sigma \phi [w/t]$ for
    an appropriate $w$. By induction hypothesis we know that $\chor
    |-_\sigma \phi[w/t]$. Lemma~\ref{lem:exists} guarantees that there
    exists $w \in fn(\chor)\cup fn(\phi)$ in order to derive $\chor
    |-_\sigma \exists t. \phi$ from $\mathsf{P_{\exists}}$.
  \item[Case $\chor |=_\sigma <<>> \phi$:] Take $(\sigma, \chor)
    \action{}^* (\sigma',\chor')$ and $\chor' |=_{\sigma'} \phi'$, we
    have by induction hypothesis that
    $\typerule{\chor'}{\sigma'}{\phi'}$. Now, we have to show that
    $\chor |-_\sigma <<>> \phi'$.  By the fact that $(\sigma, \chor)
    \action{}^* (\sigma',\chor')$, we have that $(\sigma', \chor')\in
    \textsf{Reachable}(\sigma,\chor)$, hence, we can apply rule
    $\mathsf{P}_{may}$ and we are done.
  \item[Case $\chor |=_\sigma \phi \pp \chi$:] We have that $\chor
    \equiv \chor_1 \pp \chor_2$ and $\chor_1 |=_\sigma \phi \land
    \chor_2 |=_\sigma \chi$. From the induction hypothesis $\chor_1
    |-_\sigma \phi$ and $\chor_2 |-_\sigma \chi$. Now by
    Lemma~\ref{lem:normalization} we have that $\chor_1 \equiv
    \prod_{i\in I} P_i$ and $\chor_2 \equiv \prod_{j\in J} P_j$ for
    some $I,J$. So, we can derive $\chor \equiv \prod_{i\in I} P_i \pp
    \prod_{j\in J} P_j$, and hence $\mathsf{P_{par}}$ leads to
    $\chor_1 \pp \chor_2 |-_\sigma \phi \pp \chi$. \qedhere
  \end{description} 
\end{proof}

\begin{theorem}[Termination]\label{thm:termination}
  For any configuration $(\sigma,\chor)$, where $\chor$ is
  recursion-free, and every formula $\phi$, proof-checking algorithm
  terminates.
\begin{proof}
  First, notice that all the functions \textsf{Norm}, \textsf{Next},
  and \textsf{Reachable} are total and computable. The proof is by
  induction over the structure of $\phi$.
  \begin{description}
  \item[Case $\phi = \endT$:] $\typerule{\chor}{\sigma}{\endT}$ iff
    $\textsf{Norm}(\chor) = [\ ]$.
  \item[Case $\phi = \phi_1 \land \phi_2$:] By conjunction and
    induction hypothesis on $\typerule{\chor}{\sigma}{\phi_1}$ and
    $\typerule{\chor}{\sigma}{\phi_2}$.
  \item[Case $\phi = \neg \phi'$:] $\typerule{\chor}{\sigma}{\phi}$
    iff $\typerule{\chor}{\sigma}{\phi'}$ does not hold. But by
    induction hypothesis we can construct a terminating proof or
    confutation for $\typerule{\chor}{\sigma}{\phi'}$. Hence the proof
    for $\typerule{\chor}{\sigma}{\phi}$ terminates as well.
  \item[Case $\phi = \phi_1 \pp \phi_2$:] Suppose
    $\textsf{Norm}(\chor) = [P_1,\dots,P_n]$. Notice that there exists
    a finite number of possible partitioning of $\{1,\dots,n\}$ in
    $I,J$. Hence, for every $I,J$ we can compute
    $\typerule{\prod_{i\in I} P_i}{\sigma}{\phi_1}$ and
    $\typerule{\prod_{j\in J} P_j}{\sigma}{\phi_2}$, which both
    terminate by induction hypothesis. By applying
    Lemma~\ref{lem:normalization} we prove the thesis.
  \item[Case $\phi = \langle \ell \rangle \phi'$:] First, notice that
    the set $\textsf{Next}(\sigma,\chor,\ell)$ is finite, because the
    choreographies are finite, i.e., there are a finite number of
    actionable transition in a given configuration. For each
    configuration $(\sigma',\chor') \in
    \textsf{Next}(\sigma,\chor,\ell)$,
    $\typerule{\chor'}{\sigma'}{\phi'}$ terminates by induction
    hypothesis.
  \item[Case $\phi = <<>> \phi'$:] As before, notice that the set
    $\textsf{Reachable}(\sigma,\chor)$ is finite, because the
    choreographies are finite, i.e., the choreographies are recursion
    free. For each configuration $(\sigma',\chor') \in
    \textsf{Reachable}(\sigma,\chor)$,
    $\typerule{\chor'}{\sigma'}{\phi'}$ terminates by induction
    hypothesis.
  \item[Case $\phi = \exists t\pfx \phi'$:] To prove existence is
    sufficient to check every derivation by substituting $t$ with a
    name $w\in fn(\chor)\cup fn(\phi)$. Notice that $fn(\chor) \cup
    fn(\phi)$ is finite, because both $\chor$ and $\phi$ are so. So,
    for every $w$, we can construct a terminating derivation for
    $\typerule{\chor}{\sigma}{\phi'[w/t]}$ by induction hypothesis.
  \item[Case $\phi = (e_1@A = e_@@B):$]
    $\typerule{\chor}{\sigma}{(e_1@A = e_@@B)}$ iff $e_1@A \Downarrow
    v$ and $e_@@B \Downarrow v$.  \qedhere
  \end{description}
\end{proof}
\end{theorem}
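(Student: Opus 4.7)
The plan is to proceed by structural induction on the formula $\phi$, leveraging the fact that each proof rule in Table~\ref{table:Global:proofSys} reduces the proof obligation to finitely many sub-obligations whose formulae are strict subformulae of $\phi$ (together with some computable side conditions on $(\sigma,\chor)$). Before starting the induction, I would dispatch a preliminary lemma establishing that the auxiliary functions $\textsf{Norm}$, $\textsf{Next}$ and $\textsf{Reachable}$ are all total and computable, and that $\textsf{Next}(\sigma,\chor,\ell)$ and $\textsf{Reachable}(\sigma,\chor)$ yield \emph{finite} sets whenever $\chor$ is recursion-free; similarly, $fn(\chor)\cup fn(\phi)$ is a finite set since both $\chor$ and $\phi$ are finite syntactic objects.

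With these finiteness facts in place, the induction is mostly a matter of bookkeeping. The base cases $\phi = \endF$ and $\phi = (e_1@A = e_2@B)$ reduce to evaluating $\textsf{Norm}(\chor)$ and to evaluating expressions under $\sigma$, both of which terminate. The Boolean cases $\phi_1\land\phi_2$ and $\neg\phi'$ follow directly by the induction hypothesis (for negation, a terminating decision for $\phi'$ yields a terminating decision for $\neg\phi'$ since the proof system is deterministic in its alternatives up to finite branching). For $\phi = \exists\var\pfx\phi'$, rule $\mathsf{P}_{\exists}$ only requires trying witnesses from the finite set $fn(\chor)\cup fn(\phi)$, and each sub-check $\chor |-_\sigma \phi'[w/\var]$ terminates by the inductive hypothesis applied to the strictly smaller formula $\phi'[w/\var]$ (note that substitution does not enlarge the formula in a way that threatens well-foundedness, since the structural measure we induct on is the formula's parse tree).

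The modal cases $\langle\ell\rangle\phi'$ and $\may\phi'$ are where the recursion-freeness of $\chor$ is essential: $\textsf{Next}(\sigma,\chor,\ell)$ and $\textsf{Reachable}(\sigma,\chor)$ are finite, so rules $\mathsf{P}_{action}$ and $\mathsf{P}_{may}$ only spawn finitely many subgoals on $\phi'$, each of which terminates by the induction hypothesis. The parallel case $\phi_1\pp \phi_2$ handled by $\mathsf{P}_{par}$ is similar: since $\textsf{Norm}(\chor) = [P_1,\dots,P_n]$ is a finite multiset, there are only finitely many partitions $(I,J)$ of $\{1,\dots,n\}$, and for each the two subgoals on $\phi_1$ and $\phi_2$ terminate by the induction hypothesis.

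The main obstacle I anticipate is the $\may$ case: one must justify that $\textsf{Reachable}(\sigma,\chor)$ is genuinely finite for recursion-free $\chor$. Naively this is clear because the syntactic size of the choreography strictly decreases along any transition (initiation, communication, choice, conditional each consume a prefix), but the state $\sigma$ also evolves, and one must verify that the set of reachable \emph{configurations} (pairs) stays finite --- essentially, since only finitely many variables can be updated along a bounded-length reduction, and the length of any reduction is bounded by the size of $\chor$, finiteness follows. A secondary subtlety worth noting is the treatment of structural congruence under $\mathsf{G\text{-}Struct}$, which could in principle introduce spurious branching, but this is precisely what Lemma~\ref{lem:normalization} tames by confining congruence reasoning to the normalisation step. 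Once these points are addressed, the induction closes and the algorithm terminates.
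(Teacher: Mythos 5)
Your proposal follows essentially the same route as the paper's proof: a structural induction on $\phi$ after first observing that \textsf{Norm}, \textsf{Next} and \textsf{Reachable} are total and computable and that the relevant sets (\textsf{Next}, \textsf{Reachable}, $fn(\chor)\cup fn(\phi)$) are finite for recursion-free $\chor$, with each rule of the proof system spawning only finitely many subgoals on strict subformulae. Your additional care in justifying the finiteness of \textsf{Reachable} (bounding reduction length by the size of $\chor$) and in flagging the role of structural congruence is a welcome refinement of points the paper merely asserts, but it does not change the argument.
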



\section{Conclusion and Related Work}
\label{sec:conclusion}

The ideas hereby presented constitutes just the first step towards a
verification framework for choreography. As a future work, our main
concerns relate to integrate our framework into other end-point models
and logical frameworks for the specification of sessions. In
particular, our next step will focus on relating the logic to the
end-point projection \cite{carbone7scc}, the process of automatically
generating end-point code from choreography. Other improvements to the
system proposed include the use of fixed points, essential for
describing state-changing loops, and auxiliary axioms
describing 
structural properties of a choreography. 

This work can be fruitfully nourished by related work in types and
logics for session-based communication. In \cite{LOP-places09} the
authors proposed a mapping between the calculus of structured
communications and concurrent constraint programming, allowing them to
establish a logical view of session-based communication and formulae
in First-Order Temporal Logic. In \cite{Berger2008Completeness-an},
Berger et al. presented proof systems characterising May/Must testing
pre-orders and bisimilarities over typed \mipi-calculus processes. The
connection between types and logics in such system comes in handy to
restrict the shape of the processes one might be interested, allowing
us to consider such work as a suitable proof system for the calculus
of end points. Finally, \cite{semini} studies a logic for
choreographies in a model without services and sessions while
\cite{Bocchi2010A-theory-of-des} proposes notion of global assertion
for enriching multiparty session types with simple formula describing
changing in the state of a session.  
\vspace{-0.4cm}
\paragraph{Acknowledgements}
This research has been partially supported by the Trustworthy
Pervasive Healthcare Services (TrustCare) and the Computer Supported
Mobile Adaptive Business Processes (Cosmobiz) projects. Danish Research
Agency, Grants $\#$ 2106-07-0019 (www.TrustCare.eu) and $\#$
274-06-0415 (www.cosmobiz.org).

\label{sect:bib}
\bibliographystyle{eptcs}
\bibliography{biblio}


\end{document}